\newtheorem{theorem}{Theorem}
\newtheorem{remark}{Remark}
\newtheorem{lemma}{Lemma}
\begin{document}
\title{Improved Linear Precoding over Block Diagonalization in Multi-cell Cooperative Networks}
\author{Yong~Zeng, Erry~Gunawan and Yong~Liang~Guan\\
\thanks{The authors are with the School of Electrical and Electronic Engineering, Nanyang Technological University, Singapore 639801
(email: \{ze0003ng, egunawan, eylguan\}@e.ntu.edu.sg)}
\thanks{This work was supported in part by the Advanced Communications Research Program DSOCL06271, a research grant from the Directorate of Research and Technology (DRTech), Ministry of Defence, Singapore.}}

\maketitle
\begin{abstract}
In downlink multiuser multiple-input multiple-output (MIMO) systems, block diagonalization (BD) is a practical linear precoding scheme which achieves the same degrees of freedom (DoF) as the optimal linear/nonlinear precoding schemes. However, its sum-rate performance is rather poor in the practical SNR regime due to the transmit power boost problem. In this paper, we propose an improved linear precoding scheme over BD with a so-called ``effective-SNR-enhancement'' technique. The transmit covariance matrices are obtained by firstly solving a power minimization problem subject to the minimum rate constraint achieved by BD, and then properly scaling the solution to satisfy the power constraints. It is proved that such approach equivalently enhances the system SNR, and hence compensates the transmit power boost problem associated with BD. The power minimization problem is in general non-convex. We therefore propose an efficient algorithm that solves the problem heuristically. Simulation results show significant sum rate gains over the optimal BD and the existing minimum mean square error (MMSE) based precoding schemes.          \end{abstract}
\begin{IEEEkeywords}Linear precoding,   block diagonalization, network MIMO, multi-cell cooperation, per-base-station power constraint, convex optimization
\end{IEEEkeywords}
\section{Introduction}
Traditional approaches for downlink inter-cell interference management, such as frequency reuse, coordinated scheduling or beamforming techniques \cite{210}, mostly follow the notion of ``interference avoidance''. Recent work on multi-cell cooperative processing (MCP) \cite{211}, with the idea of exploiting the interfering links instead of simply avoiding them, shows that the spectral efficiency can be significantly enhanced by allowing joint transmission from the interfering base stations (BS). In principle, MCP transforms the multi-cell multi-user network into a giant multi-user system, where the resources can be more efficiently utilized. In the ideal case, downlink MCP enabled networks are equivalent to broadcast channels (BC), where dirty-paper coding (DPC) is capacity achieving~\cite{212}. However, DPC is generally too complex for practical implementation for real-time systems due to its complicated nonlinear encoding and decoding processes. As a consequence, linear precoding schemes have drawn a lot of attentions since they can achieve a reasonable balance between complexity and performance \cite{214,215,213,201}. One class of linear precoding schemes of particular interest is block diagonalization (BD), which can be viewed as an extension of zero-forcing channel inversion in the multiple-input single-output (MISO) broadcast channels, e.g., \cite{174, 221}, to the more general multiuser MIMO networks. With BD, the inter-user interference is completely eliminated by restricting the precoding matrix for each mobile station (MS) to be orthogonal to the channels associated with all other MSs. The initial study on BD mostly focuses on single-cell systems, where the sum-power constraint is generally considered \cite{216,182,217,218}. The extension to multi-cell networks with per-BS power constraints is non-trivial \cite{219,220}. In \cite{220}, the weighted sum rate maximization problem with BD was formulated as a convex optimization problem, from which a closed form expression for the optimal BD precoders was derived. The main advantages of BD lie on its simplicity and good performance at high SNR. However, it gives quite poor performance in the low to medium SNR regime due to the transmit power boost problem.

 One straightforward solution to improving the low-to-medium SNR performance of BD seems to be the MMSE-based precoding schemes. For the special case of single-antenna receivers, a regularized channel inversion scheme was proposed in \cite{174}, with the regularization parameter inversely proportional to SNR. Such techniques were extended to the multiuser MIMO systems with sum-power constraint \cite{222,223}. For multi-cell cooperative networks with per-BS power constraints, the authors in \cite{224} proposed to decompose the precoding matrix into a preliminary matrix and a diagonal power control matrix, where the preliminary matrix was designed to have the MMSE structure in order to balance the noise and interference effects. Another MMSE-based precoding scheme under per-BS power constraints was proposed in \cite{225}, where sum-MSE is minimized directly. However, due to the complicated mathematical structure, only a local optimal solution can be obtained and it requires iteratively solving a sequence of convex problems. As will be shown in Section~\ref{S:simulation}, under per-BS power constraints, although the MMSE-based precoding schemes can provide certain performance gain over BD at low SNR, the achievable sum rates are lower than that achieved by BD as SNR increases. In other words, the existing MMSE-based precoding algorithms fail to achieve the same DoF as BD.

In this work, we focus on the MCP-enabled downlink networks under per-BS power constraints. The main objective is to propose an efficient scheme that improves the performance of BD in the low to medium SNR regime, while preserving its good performance at high SNR. Unlike BD, the proposed scheme takes the noise effect into consideration and interference leakage is allowed. The performance gain is mainly attributed to a so-called effective-SNR-enhancement technique, by solving a power minimization problem with a minimum rate constraint achieved by BD and properly scaling the obtained transmit covariance matrices to satisfy the power constraint. Such technique provides a method to compensate the transmit power boost problem associated with BD. The power minimization problem is non-convex in general due to the non-convex rate and rank constraints. To tackle this issue, we firstly convexify the rate constraints with Taylor approximation and then solve the \emph{rank-relaxed} convexified problem in the dual domain. A closed form solution in terms of the dual variables is then obtained. With such an expression, it is found that the solution is also optimal to the \emph{rank-constrained} non-convex problem since it automatically satisfies the rank constraints. The proposed scheme is efficient since eventually only one convex optimization problem needs to be solved. 

The rest of the paper is organized as follows. Section~\ref{S:systemModel} introduces the system model and problem formulation. Section~\ref{S:BD} reviews the optimal BD under per-BS power constraints. Section~\ref{S:proposed} presents the proposed scheme and in Section~\ref{S:simulation},  numerical results are given. Finally, conclusions are given in Section~\ref{S:conclusions}.

 \emph{Notations:} Throughout this paper, scalars are denoted by italicized letters. Boldface lower- and upper-case letters denote vectors and matrices, respectively. $\mathbf{I}$ denotes the identity matrix and $\mathbf{0}$ denotes an all-zero matrix. For a square matrix $\mathbf{S}$, $\mathrm{Tr}(\mathbf{S})$, $|\mathbf{S}|$, $\mathbf{S}^{-1}$ and $\mathbf{S}^{1/2}$ denote the trace, determinant, inverse and square-root of $\mathbf{S}$, respectively. $\mathbf{S}\succeq \mathbf{0}$ and $\mathbf{S}\succ \mathbf{0}$ represent that $\mathbf{S}$ is positive semi-definite and positive definite, respectively. $\mathbb{C}^{M\times N}$ denotes the space of $M\times N$ complex matrices. $\|\mathbf{x}\|_2$ is the Euclidean norm of a complex vector $\mathbf{x}$. $\mathrm{Diag}(\mathbf{x})$ denotes a diagonal matrix with the main diagonal given by $\mathbf{x}$. For an arbitrary matrix $\mathbf{X}$, $\mathbf{X}^{T}$, $\mathbf{X}^{H}$ and $\mathrm{rank}\{\mathbf{X}\}$ represents the transpose, conjugate transpose and rank of $\mathbf{X}$, respectively. $\mathrm{vec}(\mathbf{X})$ denotes a column vector by stacking all the columns of $\mathbf{X}$. $\sim$ means ``distributed as''. $\mathcal{CN}(\mathbf{x},\mathbf{\Sigma})$ represents the circularly symmetric complex Gaussian random vector with mean $\mathbf{x}$ and covariance matrix $\mathbf{\Sigma}$.

\section{System Model and Problem Formulation}\label{S:systemModel}
We consider a downlink multi-cell cooperative network with  $K_t$ BSs, each equipped with $N_t$ antennas, as shown in Fig.~\ref{F:fig1}. Denote the total number of transmitting antennas as $M=K_tN_t$. At each time slot, $K_r$ MSs are scheduled and served by all the cooperating BSs. Each MS has $N_r$ antennas and thus can receive up to $N_r$ data streams.  Denote by $\mathbf{d}_k$ the information-bearing signal for the $k$th mobile station (denoted as $MS_k$), where $\mathbf{d}_k \in \mathbb{C}^{N_r \times 1}$. Assume Gaussian codebook is used and $\mathbf{d}_k\sim \mathcal{CN}(\mathbf{0},\mathbf{I}), \forall k$. Perfect channel state information (CSI) at the BSs is assumed and the precoding matrices for all the MSs are jointly determined. The total number of transmit antennas is assumed to be no less than the number of receiving antennas of the scheduled users, i.e., $M\geq K_rN_r$. In the sequel, we assume that $M= K_rN_r$ for simplicity. The received signal at $MS_k$ is then given by
\begin{align}
\mathbf{y}_k = \mathbf{H}_k \mathbf{W}_k \mathbf{d}_k+\sum\limits_{i=1,i\neq k}^{K_r}\mathbf{H}_k\mathbf{W}_i \mathbf{d}_i + \mathbf{n}_k,\quad k=1,\ldots,K_r
\end{align}
where $\mathbf{H}_k=\left[\mathbf{H}_{k1} \quad \mathbf{H}_{k2} \cdots \mathbf{H}_{kK_t}\right] \in \mathbb{C}^{N_r \times M}$ denotes the channel matrix for $MS_k$, which is assumed to be of full row rank. $\mathbf{H}_{kj} \in \mathbb{C}^{N_r \times N_t}$ is the channel from the $j$th base station (denoted as $BS_j$) to $MS_k$. $\mathbf{W}_k \in \mathbb{C}^{M \times N_r}$ is the precoding matrix for $MS_k$, with each column corresponding to one data stream. It is possible that the number of data streams for $MS_k$ is less than $N_r$, in which case, the corresponding columns of $\mathbf{W}_k$ are set to zero vectors. $\mathbf{n}_k\in \mathbb{C}^{N_r\times 1}$ denotes the receiver noise. Without loss of generality, we assume that $\mathbf{n}_k \sim \mathcal{CN}(\mathbf{0},\mathbf{I})$, $\forall k$. Under single-user decoding with multi-user interference treated as noise assumption, the achievable rate $R_k$ for $MS_k$ is given as \cite{209}
\begin{align}\label{E:rate}
R_k=\mathrm{log}\frac{\Big | \mathbf{I}+\sum\limits_{i=1}^{K_r}\mathbf{H}_k\mathbf{W}_i\mathbf{W}_i^H\mathbf{H}_k^H\Big |}{\Big | \mathbf{I}+\sum\limits_{i=1,i\neq k}^{K_r}\mathbf{H}_k\mathbf{W}_i\mathbf{W}_i^H\mathbf{H}_k^H\Big |},\quad k=1,\ldots,K_r
\end{align}

Denote the transmit covariance matrix for $MS_k$ as $\mathbf{S}_k=\mathbb{E}\left[\mathbf{W}_k\mathbf{d}_k\mathbf{d}_k^H\mathbf{W}_k^H\right]=\mathbf{W}_k\mathbf{W}_k^H$. Then $\mathbf{S}_k \in \mathbb{C}^{M \times M}$, $\mathbf{S}_k \succeq \mathbf{0}$ and $\text{rank}\{\mathbf{S}_k\} \leq N_r$. For $BS_j$, define a  binary matrix $\mathbf{B}_j$  as \cite{220}
\begin{align}
\mathbf{B}_j \triangleq \text{Diag}\big(\underbrace{0,\cdots,0}_{(j-1)N_t},\underbrace{1,\cdots,1}_{N_t},\underbrace{0,\cdots,0}_{(K_t-j)N_t}\big)
\end{align}
Without loss of generality, assume that all BSs have the same power constraints $P$. 
Then finding the optimal linear precoder for sum rate maximization under per-BS power constraints is equivalent to solving the following optimization problem
\begin{align}
\text{(P1):} \quad \underset{\{R_k\},\{\mathbf{S}_k\}}{\text{maximize}}\quad
&\sum_{k=1}^{K_r}R_k\\
\text{subject to}
\quad& R_k\leq\mathrm{log}\frac{\Big | \mathbf{I}+\sum\limits_{i=1}^{K_r}\mathbf{H}_k\mathbf{S}_i\mathbf{H}_k^H\Big |}{\Big | \mathbf{I}+\sum\limits_{i=1,i\neq k}^{K_r}\mathbf{H}_k\mathbf{S}_i\mathbf{H}_k^H\Big |},\quad \forall k \label{C:P1RateConstraint} \\
&\sum\limits_{k=1}^{K_r}\mathrm{Tr}\left(\mathbf{B}_j\mathbf{S}_k\right)\leq P, \quad\forall j \label{C:P1PowerConstraint}\\
&\mathbf{S}_k \succeq \mathbf{0},\quad \mathrm{rank}\{\mathbf{S}_k\} \leq N_r, \quad\forall k \label{C:P1RankConstraint}
\end{align}
where \eqref{C:P1PowerConstraint} represents the per-BS power constraints. Note that (P1) optimizes over the transmit covariance matrices $\{\mathbf{S}_k\}$ instead of the precoding matrices. The explicit rank constraint is necessary since otherwise, the ranks of the resulted transmit covariance matrices may exceed $N_r$, which is impractical due to the limited number of antennas at the receivers.
 (P1) is non-convex due to the non-convex rate and rank constraints. Therefore, it is difficult to find a global optimal solution efficiently. 

\section{BD with per-BS power constraints}\label{S:BD}
Under zero inter-user interference constraint, it has been shown that (P1) can be formulated into a convex optimization problem, from which the optimal BD solution can be efficiently obtained. This section reviews  BD  under per-BS power constraints, which is mainly based on \cite{220}.  BD completely eliminates the inter-user interference by ensuring that $\mathbf{H}_i\mathbf{W}_k=\mathbf{0}, \ \forall i\neq k$, or equivalently
\begin{align}\label{E:ZeroInterference}
\mathbf{H}_i\mathbf{S}_k\mathbf{H}_i^H = \mathbf{0}, \quad\forall i\neq k
\end{align}
Define $\mathbf{G}_k = \left[\mathbf{H}_1^T \ldots \mathbf{H}_{k-1}^T \quad \mathbf{H}_{k+1}^T \ldots \mathbf{H}_{K_r}^T\right]^T \in \mathbb{C}^{N_r(K_r-1)\times M}$. Perform singular value decomposition (SVD) to $\mathbf{G}_k$ to obtain
  \begin{align}
  \mathbf{G}_k=\mathbf{U}_k\left[\mathbf{\Sigma}_k\quad\mathbf{0} \right]\bigg[\begin{array}{c}
  \mathbf{V}_k^H \\
  \mathbf{\Tilde{V}}_k^H\end{array}\bigg],
  \end{align}
  where $\mathbf{U}_k\in\mathbb{C}^{N_r(K_r-1)\times N_r(K_r-1)}$, $\mathbf{V}_k\in\mathbb{C}^{M\times N_r(K_r-1)}$, $\mathbf{\Sigma}_k$ is a $N_r(K_r-1) \times N_r(K_r-1)$ positive diagonal matrix and $\mathbf{\Tilde{V}}_k\in \mathbb{C}^{M\times N_r}$ spans the null space of $\mathbf{G}_k$. Then \eqref{E:ZeroInterference} is satisfied by letting $\mathbf{S}_k=\Tilde{\mathbf{V}}_k\mathbf{Q}_k\Tilde{\mathbf{V}}_k^H$, where $\mathbf{Q}_k\in \mathbb{C}^{N_r\times N_r}$ and $\mathbf{Q}_k\succeq \mathbf{0}$ is the new design variable. With such a structure for $\mathbf{S}_k$, $\mathrm{rank}\{\mathbf{S}_k\}\leq N_r$ is automatically guaranteed.  Then finding the optimal BD to maximize the sum rate is equivalent to solving the following problem \cite{220}
  \begin{align}
\text{(P2):} \quad \underset{\{\mathbf{Q}_k\}}{\text{maximize}}\quad
&\sum_{k=1}^{K_r}\mathrm{log}\Big | \mathbf{I}+\mathbf{H}_k\Tilde{\mathbf{V}}_k\mathbf{Q}_k\Tilde{\mathbf{V}}_k^H\mathbf{H}_k^H\Big | \\
\text{subject to} \quad
&\sum\limits_{k=1}^{K_r}\mathrm{Tr}\left(\mathbf{B}_j\Tilde{\mathbf{V}}_k\mathbf{Q}_k\Tilde{\mathbf{V}}_k^H\right)\leq P, \quad\forall j \\
&\mathbf{Q}_k \succeq \mathbf{0}, \quad\forall k.
\end{align}
(P2) is convex, and hence can be solved efficiently with standard interior point method \cite{202} or existing software tools such as \texttt{CVX} \cite{227}. In \cite{220}, a closed form solution is derived.

\section{Improved Precoding over BD}\label{S:proposed}
BD performs very well in the high SNR regime and achieves the same DoF as the optimal linear/nonlinear precoding schemes \cite{220}.  However, in the low to medium SNR regime, the performance is poor. We therefore propose an extra step of optimization to improve the performance of BD in the low to medium SNR regime, yet preserve the good performance at high SNR.

Let $\{R_k^{BD}\}$ be the rate tuple achieved by BD. Consider the following optimization problem
\begin{align}
\text{(P3):}\quad\underset{\rho,\{\mathbf{S}_k\}}{\text{maximize}} \quad &-\rho\\
\label{C:P3RateConstraint}\text{subject to}\quad& \mathrm{log}\frac{\Big | \mathbf{I}+\sum\limits_{i=1}^{K_r}\mathbf{H}_k\mathbf{S}_i\mathbf{H}_k^H\Big |}{\Big | \mathbf{I}+\sum\limits_{i=1,i\neq k}^{K_r}\mathbf{H}_k\mathbf{S}_i\mathbf{H}_k^H\Big |}\geq R_k^{BD},\quad \forall k\\
&\sum\limits_{k=1}^{K_r}\mathrm{Tr}\left(\mathbf{B}_j\mathbf{S}_k\right)\leq \rho P, \quad\forall j\\
&\mathbf{S}_k \succeq \mathbf{0},\quad \mathrm{rank}\{\mathbf{S}_k\} \leq N_r, \quad\forall k \label{C:P3RankConstraint}
\end{align}
(P3) minimizes a common power factor $\rho$ for all BSs, while ensuring a minimum rate tuple  achieved by BD. Unlike  BD which completely eliminates inter-user interference, interference leakage is allowed in (P3). For the special case of $N_r=1$, (P3) can be transformed to the power minimization problem in \cite{226}, where an equivalent second order cone programming (SOCP) form is known. However, for the general case when $N_r\geq 2$, no convex formulation of (P3) is known. Before solving the problem, we will discuss how the solution to (P3) will help to find an improved precoder design over BD.
\begin{theorem}\label{T:OptPowerMinimization}
(P3) is guaranteed to be feasible and the solution $\{\rho^{opt},\{\mathbf{S}_k^{opt}\}\}$ satisfies $\rho^{opt}\leq 1$.
\end{theorem}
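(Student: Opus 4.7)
The plan is to exhibit an explicit feasible point of (P3) whose $\rho$ value is $1$; optimality over a feasible set containing this point then immediately yields $\rho^{opt}\leq 1$.

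First I would take the BD solution $\{\mathbf{S}_k^{BD}\}$ of (P2) as the candidate, paired with $\rho=1$. By construction in Section~\ref{S:BD}, each $\mathbf{S}_k^{BD}$ has the form $\Tilde{\mathbf{V}}_k\mathbf{Q}_k\Tilde{\mathbf{V}}_k^H$ with $\mathbf{Q}_k\succeq\mathbf{0}$, which immediately certifies $\mathbf{S}_k^{BD}\succeq\mathbf{0}$ and $\mathrm{rank}\{\mathbf{S}_k^{BD}\}\leq N_r$, so the semidefinite and rank constraints of (P3) hold. Because $\{\mathbf{S}_k^{BD}\}$ is feasible for (P2), it also satisfies the per-BS power constraints with bound $P$, i.e.\ $\sum_k\mathrm{Tr}(\mathbf{B}_j\mathbf{S}_k^{BD})\leq P=\rho P$ for $\rho=1$.

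The only thing that needs a brief argument is the rate constraint \eqref{C:P3RateConstraint}. Since the BD precoders enforce $\mathbf{H}_i\mathbf{S}_k^{BD}\mathbf{H}_i^H=\mathbf{0}$ for all $i\neq k$ (equation \eqref{E:ZeroInterference}), the denominator in the left-hand side of \eqref{C:P3RateConstraint} collapses to $|\mathbf{I}|=1$ and the numerator reduces to $|\mathbf{I}+\mathbf{H}_k\mathbf{S}_k^{BD}\mathbf{H}_k^H|$. Hence the left-hand side equals $\log|\mathbf{I}+\mathbf{H}_k\mathbf{S}_k^{BD}\mathbf{H}_k^H|$, which by definition of $\{R_k^{BD}\}$ is exactly $R_k^{BD}$, so \eqref{C:P3RateConstraint} holds with equality.

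This demonstrates that $(\rho,\{\mathbf{S}_k\})=(1,\{\mathbf{S}_k^{BD}\})$ is feasible for (P3), establishing feasibility and forcing $\rho^{opt}\leq 1$ since (P3) minimizes $\rho$. There is essentially no hard step: the only mildly non-trivial point is recognizing that the zero-interference property of BD makes the ratio in \eqref{C:P3RateConstraint} degenerate into exactly $R_k^{BD}$, so the rate constraint is tight under the BD solution.
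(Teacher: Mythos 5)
Your proof is correct and follows exactly the same route as the paper: exhibiting $(\rho,\{\mathbf{S}_k\})=(1,\{\mathbf{S}_k^{BD}\})$ as a feasible point of (P3), which the paper states without elaboration. Your additional verification of each constraint (in particular that the zero-interference property of BD makes the rate constraint \eqref{C:P3RateConstraint} hold with equality) simply fills in the details the paper leaves to the reader.
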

\begin{proof} It is easy to see that $\{\rho=1,\{\mathbf{S}_k^{BD}\}\}$ is feasible for (P3), where $\{\mathbf{S}_k^{BD}\}$ is the optimal BD transmit covariance matrices set. As a result, Theorem~\ref{T:OptPowerMinimization} follows.
\end{proof}

Although $\{\mathbf{S}_k^{opt}\}$ does not strictly increase the rates over $\{R_k^{BD}\}$\footnote{In fact, with $\{\mathbf{S}_k^{opt}\}$, the rate achieved by user $k$  equals to $R_k^{BD}$. This can be seen as follows. Suppose on the contrary, with the optimal solution $\{\rho^{opt},\{\mathbf{S}_k^{opt}\}\}$, there exists a user $k$ such that $R_k>R_k^{BD}$. Then we can strictly decrease the transmit power to user $k$ so that the minimum rate constraint is still satisfied. As a consequence, the power to other users can also be strictly decreased since the interference from user $k$ is reduced. This implies that the power factor $\rho$ can be further reduced, which contradicts that $\rho^{opt}$ is the optimal solution.},
the minimized power  factor $\rho^{opt}$ makes it possible to effectively suppress the noise and hence enhance the effective SNR. This can be achieved by using the new transmit covariance matrices $\mathbf{S}_k^{new}=\mathbf{S}_k^{opt}/\rho^{opt}, \forall k$. Since $\{\rho^{opt},\{\mathbf{S}_k^{opt}\}\}$ is feasible to (P3), it is easy to see that $\{\mathbf{S}_k^{new}\}$ satisfies the rank and power constraints in (P1), i.e., \ $\mathrm{rank}\{\mathbf{S}_k^{new}\}\leq N_r, \ \forall k$ and $\sum\limits_{k=1}^{K_r}\mathrm{Tr}\left(\mathbf{B}_j\mathbf{S}_k^{new}\right)\leq P, \ \forall j$.
Furthermore, the new achievable rate for $MS_k$ satisfies
\begin{align}
R_k^{new}&= \mathrm{log}\frac{\Big | \mathbf{I}+\frac{1}{\rho^{opt}}\sum_{i=1}^{K_r}{\mathbf{H}_k\mathbf{S}_i^{opt}\mathbf{H}_k^H} \Big |}
{\Big |\mathbf{I}+\frac{1}{\rho^{opt}}\sum_{i=1,i\neq k}^{K_r}{\mathbf{H}_k\mathbf{S}_i^{opt}\mathbf{H}_k^H} \Big |}\\\label{E:noiseSuppression}
&=\mathrm{log}\frac{\Big | \rho^{opt}\mathbf{I}+\sum_{i=1}^{K_r}{\mathbf{H}_k\mathbf{S}_i^{opt}\mathbf{H}_k^H} \Big |}
{\Big |\rho^{opt}\mathbf{I}+\sum_{i=1,i\neq k}^{K_r}{\mathbf{H}_k\mathbf{S}_i^{opt}\mathbf{H}_k^H} \Big |}\\
&\geq \mathrm{log}\frac{\Big | \mathbf{I}+\sum_{i=1}^{K_r}{\mathbf{H}_k\mathbf{S}_i^{opt}\mathbf{H}_k^H} \Big |}
{\Big |\mathbf{I}+\sum_{i=1,i\neq k}^{K_r}{\mathbf{H}_k\mathbf{S}_i^{opt}\mathbf{H}_k^H} \Big |}\\
&\geq R_k^{BD}
\end{align}
The last inequality follows since $\{\mathbf{S}_k^{opt}\}$ satisfies \eqref{C:P3RateConstraint}. The second last inequality follows since $\rho^{opt}\leq 1$.

The above relationship shows that the new set of transmit covariance matrices $\{\mathbf{S}_k^{new}\}$ will at least not decrease each user's rate over that achieved by BD.  With \eqref{E:noiseSuppression}, $R_k^{new}$ can be interpreted as the achievable rate by applying $\{\mathbf{S}_k^{opt}\}$ in an environment with noise power $\rho^{opt}$, instead of $1$ as in the original system. Since $\rho^{opt}\leq 1$, this implies an effective SNR enhancement by $10\mathrm{log}_{10}\left(1/\rho^{opt}\right)$ dB for $\{\mathbf{S}_k^{new}\}$ over $\{\mathbf{S}_k^{opt}\}$. Furthermore, since $\{\mathbf{S}_k^{opt}\}$ performs at least as well as $\{\mathbf{S}_k^{BD}\}$ due to \eqref{C:P3RateConstraint}, then with $\{\mathbf{S}_k^{new}\}$, an effective SNR enhancement by $10\mathrm{log}_{10}\left(1/\rho^{opt}\right)$ dB over BD is also guaranteed. Such SNR enhancement provides a way to compensate the transmit power boost problem associated with BD, and hence increase the achievable rate. We are now ready to present the algorithms to solve (P3).
\subsection{Solve (P3) When $N_r=1$}
When each MS has single antenna, and hence single data stream only, BD reduces to the well-known zero-forcing (ZF) precoding \cite{174,221}. Denote the channel vector to $MS_k$ as $\mathbf{h}_k\in \mathbb{C}^{1\times M}$, then (P3) can be equivalently formulated into the following problem \cite{226}
 \begin{align}
\text{(P4):}\quad \underset{\rho,\{\mathbf{w}_k\}}{\text{minimize}} \quad & \rho\\
 \text{subject to} \quad & \frac{|\mathbf{h}_k\mathbf{w}_k|^{2}}{1+\sum_{i=1,i\neq k}^{K_r}|\mathbf{h}_k\mathbf{w}_i|^2} \geq \gamma_k^{ZF},\quad \forall k\\
 &\sum_{k=1}^{K_r}\|\mathbf{w}_k^{[j]}\|^2 \leq \rho P, \quad \forall j
 \end{align}
 where $\{\gamma_k^{ZF}\}$ is the SINR tuple achieved with the ZF precoding, $\mathbf{w}_k\in \mathbb{C}^{M\times 1}$ is the precoding vector for $MS_k$ and $\mathbf{w}_k^{[j]}\in \mathbb{C}^{N_t\times 1}$ corresponds to the precoding vector for $MS_k$ used by $BS_j$. The above problem can be transformed into an equivalent SOCP as follows \cite{226}
 \begin{equation}
 \begin{aligned}
 \underset{\Tilde{\rho},\{\mathbf{w}_k\}}{\text{minimize}}\quad & \Tilde{\rho}\\
 \text{subject to} \quad & \left\|
 \begin{array}{c}
 \left[\mathbf{h}_k\mathbf{W}\right]^T \\
 1\\
 \mathbf{h}_k\mathbf{w}_k
 \end{array}
 \right\|_2 \preceq \mathcal{K}\mathbf{0},\quad \forall k\\
 &\left\|
 \begin{array}{c}
 \mathrm{vec}\left(\mathbf{M}_j\right)\\
 \Tilde{\rho}\sqrt{P}
 \end{array}
 \right\|_2\preceq \mathcal{K}\mathbf{0},\quad\forall j
 \end{aligned}
 \end{equation}
 where $\rho=\Tilde{\rho}^2$, $\mathbf{W}\triangleq\left[\mathbf{w}_1\quad\mathbf{w}_2\ldots\mathbf{w}_{K_r}\right]\in\mathbb{C}^{M\times K_r}$, $\mathbf{M}_j\triangleq \left[\mathbf{w}_1^{[j]}\quad\mathbf{w}_2^{[j]}\ldots\mathbf{w}_{K_r}^{[j]} \right]\in\mathbb{C}^{N_t\times K_r}$. For any vector $\mathbf{y}\in \mathbb{C}^{n \times 1}, x\in \mathbb{R}$, $\bigg\|\begin{array}{c}\mathbf{y}\\x\end{array} \bigg\|_2\preceq \mathcal{K}\mathbf{0}$ represents the second order cone constraint $\|\mathbf{y}\|_2^2\leq x^2$. The SOCP is convex and can be solved efficiently with software tools such as \texttt{CVX} \cite{227}.

 \subsection{Solve (P3) When $N_r\geq 2$}
 When $N_r \geq 2$, no convex formulation for (P3) is known. The non-convexity arises from the non-convex rate and rank constraints \eqref{C:P3RateConstraint} and \eqref{C:P3RankConstraint}. In this subsection, we propose an efficient algorithm to solve (P3) approximately. Firstly, the rate constraints \eqref{C:P3RateConstraint} are convexified by applying the following first-order Taylor approximation\footnote{Note that although \eqref{E:rateApprox1} is sufficient to convexify \eqref{C:P3RateConstraint}, \eqref{E:rateApprox2} is necessary to handle the non-convex rank constraints given by \eqref{C:P3RankConstraint}.}
  \begin{align}\label{E:rateApprox1}
  \mathrm{log}\Big |\mathbf{I}+\mathbf{H}_k\Big(\sum_{i\neq k}^{K_r}\mathbf{S}_i\Big)\mathbf{H}_k^H \Big |\approx \mathrm{Tr}\Big[\mathbf{H}_k\Big(\sum_{i\neq k}^{K_r}\mathbf{S}_i\Big)\mathbf{H}_k^H\Big]
  =\mathrm{Tr}\Big[\mathbf{H}_k^H\mathbf{H}_k\Big(\sum_{i\neq k}^{K_r}\mathbf{S}_i\Big)\Big]
  \end{align}
  \begin{equation}\label{E:rateApprox2}
  \begin{aligned}
  \mathrm{log}\Big|\mathbf{I}+ \mathbf{H}_k\Big(\sum_{i=1}^{K_r}\mathbf{S}_i\Big)\mathbf{H}_k^H\Big|&\approx
  \mathrm{log}\big|\mathbf{I}+\mathbf{H}_k\mathbf{S}_k\mathbf{H}_k^H\big|
  +\mathrm{Tr}\Big[\big(\mathbf{I}+\mathbf{H}_k\mathbf{S}_k\mathbf{H}_k^H\big)^{-1}\mathbf{H}_k\Big(\sum_{i\neq k}^{K_r}\mathbf{S}_i\Big)\mathbf{H}_k^H\Big]\\
  &\approx \mathrm{log}\big|\mathbf{I}+\mathbf{H}_k\mathbf{S}_k\mathbf{H}_k^H\big|
  +\mathrm{Tr}\Big[\mathbf{H}_k^H\big(\mathbf{I}+\mathbf{H}_k\mathbf{S}_k^{BD}\mathbf{H}_k^H\big)^{-1}\mathbf{H}_k\Big(\sum_{i\neq k}^{K_r}\mathbf{S}_i\Big)\Big]
  \end{aligned}
  \end{equation}
  where the identity $\mathrm{Tr}(\mathbf{A}\mathbf{B})=\mathrm{Tr}(\mathbf{B}\mathbf{A})$ has been used. In \eqref{E:rateApprox2}, the gradient of the log-determinant function at the point $\mathbf{I}+\mathbf{H}_k\mathbf{S}_k\mathbf{H}_k^H$ has been approximated as $\big(\mathbf{I}+\mathbf{H}_k\mathbf{S}_k^{BD}\mathbf{H}_k^H\big)^{-1}$. With \eqref{E:rateApprox1} and \eqref{E:rateApprox2}, (P3) can be approximated as
  \begin{align}
  \text{(P5):}\quad\underset{\rho,\{\mathbf{S}_k\}}{\text{maximize}}\quad &-\rho\\
  \text{subject to}\quad & \mathrm{log}\big|\mathbf{I}+\mathbf{H}_k\mathbf{S}_k\mathbf{H}_k^H \big|\geq \mathrm{Tr}\Big(\mathbf{F}_k\sum_{i\neq k}^{K_r}\mathbf{S}_i\Big)+R_k^{BD},\quad \forall k \label{C:P5RateConstraint} \\
  &\sum_{k=1}^{K_r}\mathrm{Tr}\big(\mathbf{B}_j\mathbf{S}_k\big) \leq \rho P, \quad \forall j \label{C:P5PowerConstraint} \\
  &\mathbf{S}_k \succeq \mathbf{0},\quad \mathrm{rank}\{\mathbf{S}_k\}\leq N_r, \quad \forall k
  \end{align}
  where
  $ \mathbf{F}_k=\mathbf{H}_k^H\big[\mathbf{I}-(\mathbf{I}+\mathbf{H}_k\mathbf{S}_k^{BD}\mathbf{H}_k^H)^{-1}\big]\mathbf{H}_k$.

   It can be verified that $\{\rho=1,\{\mathbf{S}_k^{BD}\}\}$ is still feasible for (P5), so the solution $\{\rho^{\star\star}, \{\mathbf{S}_k^{\star\star}\}\}$ to  (P5) still satisfies $\rho^{\star\star}\leq 1$. Due to the rank constraint, (P5) is still non-convex. However, by solving the rank-relaxed problem (denoted by (R-P5)) with the dual method, we show that the optimal solution is guaranteed to satisfy the rank constraint, and hence it is also an optimal solution of the non-convex problem (P5).
  Denote by $\{\lambda_k\}$ and $\{\mu_j\}$ the set of dual variables of (R-P5), which are associated with the rate \eqref{C:P5RateConstraint} and per-BS power constraints \eqref{C:P5PowerConstraint}, respectively. Then the Lagrangian function of (R-P5) can be written as
  \begin{align}
  L&\left(\rho,\{\mathbf{S}_k\},\{\lambda_k\},\{\mu_j\}\right) \notag \\
  =&-\rho+\sum_{k=1}^{K_r}\lambda_k\Big[\mathrm{log}|\mathbf{I}+\mathbf{H}_k\mathbf{S}_k\mathbf{H}_k^H|-\mathrm{Tr}\big(\mathbf{F}_k\sum_{i\neq k}^{K_r}\mathbf{S}_i\big)-R_k^{BD}\Big]
  +\sum_{j=1}^{K_t}\mu_j\Big[\rho P-\sum_{k=1}^{K_r}\mathrm{Tr}\big(\mathbf{B}_j\mathbf{S}_k\big) \Big]  \notag \\
  =&\rho\Big(P\sum_{j=1}^{K_t}\mu_j-1\Big)+\sum_{k=1}^{K_r}\Big[\lambda_k\mathrm{log}|\mathbf{I}+\mathbf{H}_k\mathbf{S}_k\mathbf{H}_k^H|-\mathrm{Tr}(\mathbf{C}_k\mathbf{S}_k)-\lambda_k R_k^{BD}\Big], \label{Lagrangian}
  \end{align}
where $\mathbf{C}_k\triangleq \displaystyle\sum\limits_{j=1}^{K_t}\mu_j\mathbf{B}_j+\sum_{i=1,i\neq k}^{K_r}\lambda_i\mathbf{F}_i$. The Lagrangian dual objective is then written as
\begin{equation}\label{Primal}
\begin{aligned}
g\left(\{\lambda_k\},\{\mu_j\}\right)&=\underset{\mathbf{S}_k\succeq\mathbf{0},\forall k}{\text{max}}\ \underset{\rho}{\text{max}}\ L\left(\rho,\{\mathbf{S}_k\},\{\lambda_k\},\{\mu_j\}\right)\\
&=\begin{cases}
\underset{\mathbf{S}_k\succeq \mathbf{0},\forall{k}}{\text{max}}\ \Tilde{L}\left(\{\mathbf{S}_k\},\{\lambda_k\},\{\mu_j\}\right),\quad & \text{if $\sum_{j=1}^{K_t}\mu_j=1/P$}\\
\infty,\quad & \text{otherwise}
\end{cases}
\end{aligned}
\end{equation}
where $\Tilde{L}\left(\{\mathbf{S}_k\},\{\lambda_k\},\{\mu_j\}\right)\triangleq \sum_{k=1}^{K_r}\Big[\lambda_k\mathrm{log}|\mathbf{I}+\mathbf{H}_k\mathbf{S}_k\mathbf{H}_k^H|-\mathrm{Tr}(\mathbf{C}_k\mathbf{S}_k)-\lambda_k R_k^{BD}\Big]$.

Note that since $L\left(\rho,\{\mathbf{S}_k\},\{\lambda_k\},\{\mu_j\}\right)$ is an affine function of $\rho$, $g\left(\{\lambda_k\},\{\mu_j\}\right)$ is finite only when $\sum_{j=1}^{K_t}\mu_j=1/P$. Since the dual variables should be chosen such that the Lagrangian dual function is bounded, this imposes equality constraints on the dual optimization problem of (R-P5), which is stated as
\begin{equation}
\begin{aligned}
\text{(R-P5-D)}:\quad\underset{\lambda_k\geq 0,\mu_j\geq 0,\forall{k,j}}{\text{minimize}}\quad & g\left(\{\lambda_k\},\{\mu_j\}\right)\\
\text{subject to} \quad & \sum_{j=1}^{K_t}\mu_j=1/P\\
\end{aligned}
\end{equation}
Since (R-P5) is convex and satisfies the Slater's condition \cite{202}, the duality gap between the optimal objective function value of (R-P5) and that of its dual (R-P5-D) is zero. Thus, the optimal solution can be obtained by simultaneously updating the primal variables, $\{\mathbf{S}_k\}$ and the dual variables $\{\lambda_k\}$ and $\{\mu_j\}$. For a given set of dual variables $\{\lambda_k\}$ and $\{\mu_j\}$, $\{\mathbf{S}_k^{\star}\}$ can be updated by solving the maximization problem \eqref{Primal}. With $\{\mathbf{S}_k^{\star}\}$, the dual variables $\{\lambda_k\}$ and $\{\mu_j\}$ can be updated with subgradient-based method \cite{229}.

\subsubsection{Primal Update}
We firstly focus on solving for $\{\mathbf{S}_k^{\star}\}$ with a given set of dual variables $\{\lambda_k\}$ and $\{\mu_j\}$. It can be observed from \eqref{Primal} that the maximization of $ \Tilde{L}\left(\{\mathbf{S}_k\},\{\lambda_k\},\{\mu_j\}\right)$ over $\{\mathbf{S}_k\}$ can be decoupled into $K_r$ parallel sub-problems, each solving for one $\mathbf{S}_k$. By discarding the irrelevant terms, the subproblem for solving $\mathbf{S}_k$ , given $\{\lambda_k\}$ and $\{\mu_j\}$, is
\begin{align}
\text{(P6):}\quad\underset {\mathbf{S}_k\succeq \mathbf{0}}{\text{maximize}} \ \lambda_k\mathrm{log}|\mathbf{I}+\mathbf{H}_k\mathbf{S}_k\mathbf{H}_k^H|-\mathrm{Tr}(\mathbf{C}_k\mathbf{S}_k),
\end{align}
where $\mathbf{C}_k\triangleq \displaystyle\sum\limits_{j=1}^{K_t}\mu_j\mathbf{B}_j+\sum_{i=1,i\neq k}^{K_r}\lambda_i\mathbf{F}_i\in \mathbb{C}^{M\times M}$.
\begin{lemma}\label{Lemma:rankCk}
For (P6) to have a bounded objective value, the dual variables $\{\lambda_k\}$ and $\{\mu_j\}$ should have values such that $\mathbf{C}_k$ is positive definite, i.e., $\mathbf{C}_k\succ \mathbf{0}$
\end{lemma}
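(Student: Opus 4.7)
The plan is to prove the lemma by contrapositive: assuming $\mathbf{C}_k$ fails to be strictly positive definite, I will exhibit a feasible sequence of $\mathbf{S}_k$'s along which the objective of (P6) grows without bound. A preliminary observation to record is that $\mathbf{C}_k\succeq\mathbf{0}$ holds automatically, since each $\mathbf{B}_j$ is diagonal $0/1$, each $\mathbf{F}_i=\mathbf{H}_i^H[\mathbf{I}-(\mathbf{I}+\mathbf{H}_i\mathbf{S}_i^{BD}\mathbf{H}_i^H)^{-1}]\mathbf{H}_i$ is PSD because $\mathbf{I}+\mathbf{H}_i\mathbf{S}_i^{BD}\mathbf{H}_i^H\succeq\mathbf{I}$ implies $\mathbf{I}-(\mathbf{I}+\mathbf{H}_i\mathbf{S}_i^{BD}\mathbf{H}_i^H)^{-1}\succeq\mathbf{0}$, and the dual variables are all nonnegative. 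Hence only the definiteness (not semi-definiteness) of $\mathbf{C}_k$ is at stake.

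The core step is as follows. Suppose there exists a unit vector $\mathbf{v}$ with $\mathbf{C}_k\mathbf{v}=\mathbf{0}$. I would probe (P6) along the PSD ray $\mathbf{S}_k(t)=t\mathbf{v}\mathbf{v}^H$, $t\geq 0$, which is clearly feasible. Along this ray the linear penalty $\mathrm{Tr}(\mathbf{C}_k\mathbf{S}_k(t))=t\mathbf{v}^H\mathbf{C}_k\mathbf{v}=0$ vanishes identically, while the matrix determinant lemma reduces the log-det term to $\log(1+t\|\mathbf{H}_k\mathbf{v}\|_2^2)$. The objective therefore equals $\lambda_k\log(1+t\|\mathbf{H}_k\mathbf{v}\|_2^2)$, which tends to $+\infty$ as $t\to\infty$ provided $\lambda_k>0$ and $\mathbf{H}_k\mathbf{v}\neq\mathbf{0}$, contradicting the assumed boundedness of (P6).

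The main obstacle is justifying the two side conditions needed to drive the objective to infinity. The case $\lambda_k=0$ collapses (P6) to a linear SDP trivially bounded above by $0$ whenever $\mathbf{C}_k\succeq\mathbf{0}$, corresponding to an inactive rate constraint that can be excluded as a degenerate regime. The case $\mathbf{H}_k\mathbf{v}=\mathbf{0}$ is more delicate: since $\mathbf{H}_k$ has full row rank $N_r$, $\mathrm{null}(\mathbf{H}_k)$ has codimension $N_r$, and for generic channel realizations one can always find a null vector of $\mathbf{C}_k$ lying outside $\mathrm{null}(\mathbf{H}_k)$; if every null direction of $\mathbf{C}_k$ happened to be annihilated by $\mathbf{H}_k$, one could still choose $\mathbf{v}$ to be a generic unit vector with $\mathbf{v}^H\mathbf{C}_k\mathbf{v}$ arbitrarily small, producing a sequence that violates boundedness by a limiting argument. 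I would present the generic case as the main chain and handle the degenerate subcases in a brief remark, so that the conclusion $\mathbf{C}_k\succ\mathbf{0}$ drops out as a clean necessary condition for boundedness.
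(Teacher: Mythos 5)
Your main chain is essentially the paper's own proof: the paper also probes (P6) along a rank-one ray $\mathbf{S}_k=t\mathbf{q}\mathbf{q}^H$ with $\mathbf{q}$ an eigenvector of $\mathbf{C}_k$ for a non-positive eigenvalue $\alpha$, reduces the log-det to $\lambda_k\log(1+t\|\mathbf{H}_k\mathbf{q}\|^2)-\alpha t$, and lets $t\to\infty$; it likewise needs $\lambda_k>0$ (justified there by activeness of the rate constraints and complementary slackness, which you replace with a slightly looser ``degenerate regime'' exclusion) and $\mathbf{H}_k\mathbf{q}\neq\mathbf{0}$, which it asserts holds with probability one because $\mathbf{C}_k$ is built only from the other users' channels. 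Your preliminary observation that $\mathbf{C}_k\succeq\mathbf{0}$ automatically is correct and mildly streamlines matters, since it reduces the issue to a zero eigenvalue.

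One concrete point to fix: your fallback ``limiting argument'' for the subcase where every null direction of $\mathbf{C}_k$ lies in $\mathrm{null}(\mathbf{H}_k)$ does not work. Along a ray with $a=\|\mathbf{H}_k\mathbf{v}\|_2^2$ and $b=\mathbf{v}^H\mathbf{C}_k\mathbf{v}>0$, the supremum over $t$ of $\lambda_k\log(1+ta)-bt$ equals $\lambda_k\log(\lambda_k a/b)-\lambda_k+b/a$, which diverges only if $a/b\to\infty$. But if $\mathbf{q}\in\mathrm{null}(\mathbf{C}_k)\cap\mathrm{null}(\mathbf{H}_k)$ and you take $\mathbf{v}=\cos\theta\,\mathbf{q}+\sin\theta\,\mathbf{u}$, then $b=\sin^2\theta\,\mathbf{u}^H\mathbf{C}_k\mathbf{u}$ and $a=\sin^2\theta\,\|\mathbf{H}_k\mathbf{u}\|_2^2$, so $a/b$ is independent of $\theta$ and stays bounded; indeed in that configuration the objective of (P6) genuinely is bounded, so the lemma is false on that (measure-zero) event. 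The correct resolution is the paper's: dismiss the event $\mathbf{H}_k\mathbf{q}=\mathbf{0}$ as occurring with probability zero, rather than claiming a limiting sequence certifies unboundedness there.
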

\begin{proof} See Appendix~\ref{ProofRankCk}.\end{proof}

 With Lemma~\ref{Lemma:rankCk}, $\mathbf{C}_k$ can be decomposed as $\mathbf{C}_k=\mathbf{C}_k^{1/2}\mathbf{C}_k^{1/2}$, where $\mathbf{C}_k^{1/2}$ is  Hermitian and invertible. Furthermore, $\mathrm{Tr}(\mathbf{C}_k\mathbf{S}_k)=\mathrm{Tr}(\mathbf{C}_k^{1/2}\mathbf{S}_k\mathbf{C}_k^{1/2})$. Define $\Tilde{\mathbf{S}}_k\triangleq \mathbf{C}_k^{1/2}\mathbf{S}_k\mathbf{C}_k^{1/2}$, then $\mathbf{S}_k=\mathbf{C}_k^{-1/2}\Tilde{\mathbf{S}}_k\mathbf{C}_k^{-1/2}$. Then (P6) is equivalent to
\begin{align}\label{E:TildeSk}
\underset{\Tilde{\mathbf{S}}_k\succeq \mathbf{0}}{\text{maximize}}\ \lambda_k\mathrm{log}|\mathbf{I}+\mathbf{H}_k\mathbf{C}_k^{-1/2}\Tilde{\mathbf{S}}_k\mathbf{C}_k^{-1/2}\mathbf{H}_k^H|-\mathrm{Tr}(\Tilde{\mathbf{S}}_k)
\end{align}
To find the optimal $\Tilde{\mathbf{S}}_k$, express the (reduced) SVD of $\mathbf{H}_k\mathbf{C}_k^{-1/2}\in\mathbb{C}^{N_r\times M}$ as
\begin{align}
\mathbf{H}_k\mathbf{C}_k^{-1/2}=\bar{\mathbf{U}}_k\bar{\mathbf{\Sigma}}_k\bar{\mathbf{V}}_k^H,
\end{align}
where $\bar{\mathbf{U}}_k\in\mathbb{C}^{N_r\times N_r}$,$\bar{\mathbf{V}}_k\in\mathbb{C}^{M\times N_r}$ and $\bar{\mathbf{U}}_k^H\bar{\mathbf{U}}_k=\bar{\mathbf{V}}_k^H\bar{\mathbf{V}}_k=\mathbf{I}_{N_r}$. $\bar{\mathbf{\Sigma}}_k=\mathrm{Diag}(\bar{\sigma}_{k,1},\ldots,\bar{\sigma}_{k,N_r})$. Then \eqref{E:TildeSk} is equivalent to
\begin{align}\label{E:TildeSk2}
\underset{\Tilde{\mathbf{S}}_k\succeq \mathbf{0}}{\text{maximize}}\ \lambda_k \mathrm{log}|\mathbf{I}+\bar{\mathbf{V}}_k^H\Tilde{\mathbf{S}}_k\bar{\mathbf{V}}_k\bar{\mathbf{\Sigma}}_k^2|-\mathrm{Tr}(\Tilde{\mathbf{S}}_k)
\end{align}
Applying the Hadamard's inequality \cite{209}, the optimal solution to \eqref{E:TildeSk2} and hence to \eqref{E:TildeSk} is given as $\Tilde{\mathbf{S}}_k^{\star}=\bar{\mathbf{V}}_k\mathbf{D}_k\bar{\mathbf{V}}_k^H$, where $\mathbf{D}_k=\mathrm{Diag}(d_{k,1},\ldots,d_{k,N_r})$ with $d_{k,s}$  obtained by standard water-filling algorithm \cite{209}
\begin{align}
d_{k,s}=\bigg(\lambda_k-\frac{1}{\bar{\sigma}_{k,s}^2} \bigg)^+, \quad s=1,\ldots,N_r,
\end{align}
where $(x)^+\triangleq \max(0,x)$. With such results, the optimal solution to (P6) for a given set of dual variables $\{\lambda_k\}$ and $\{\mu_j\}$ is given as
\begin{align}\label{E:SkStarSolution}
\mathbf{S}_k^{\star}=\mathbf{C}_k^{-1/2}\bar{\mathbf{V}}_k\mathbf{D}_k\bar{\mathbf{V}}_k^H\mathbf{C}_k^{-1/2}
\end{align}
When the optimal solution for dual variables  $\{\lambda_k\}$ and $\{\mu_j\}$ is obtained, the corresponding solution in \eqref{E:SkStarSolution} (now denoted by $\mathbf{S}_k^{\star\star}$) becomes optimal for (R-P5).
\begin{remark}
Since $\bar{\mathbf{V}}_k\in\mathbb{C}^{M\times N_r}$, $\mathrm{rank}\{\mathbf{S}_k^{\star\star}\}\leq N_r$ is automatically satisfied due to \eqref{E:SkStarSolution}. As a result, $\{\mathbf{S}_k^{\star\star}\}$ is an optimal solution to the rank constrained non-convex problem (P5) as well. On the other hand, if (R-P5) is directly solved with software tools such as \texttt{CVX} \cite{227}, there is no guarantee that the rank constraints will be satisfied.
\end{remark}
\begin{remark}
With $\{\mathbf{S}_k^{\star\star}\}$ obtained, the optimal power factor to (P5) can be calculated as
\begin{align}\label{E:rhoStarStar}
\rho^{\star\star}=\underset{j\in\{1,\ldots,K_t\}}{\max}\ \frac{1}{P}\sum_{k=1}^{K_r}\mathrm{Tr}(\mathbf{B}_j\mathbf{S}_k^{\star\star})
\end{align}
\end{remark}

\subsubsection{Dual Update}
We now focus on solving the dual problem (R-P5-D). The dual variables $\{\lambda_k\}$ and $\{\mu_j\}$ can be updated with subgradient-based method after finding $\{\mathbf{S}_k^{\star}\}$. The equality constraint in (R-P5-D) can be eliminated by substituting $\mu_{K_t}=\frac{1}{P}-\sum_{j=1}^{K_t-1}\mu_j$ so that the problem dimension is reduced by $1$. Then the dual function after substitution of $\mu_{K_t}$ is given by
\begin{equation}\label{ReducedLagrangian}
\begin{aligned}
\Tilde{g}\left(\{\lambda_k\},\{\mu_j\}_{j=1}^{K_t-1}\right)=\underset{\mathbf{S}_k\succeq \mathbf{0},\forall k}{\max}
& \Bigg\{  \sum\limits_{k=1}^{K_r}\lambda_k \big[\mathrm{log}|\mathbf{I}+\mathbf{H}_k\mathbf{S}_k\mathbf{H}_k^H|-\mathrm{Tr}(\mathbf{F}_k\sum_{i\neq k}^{K_r}\mathbf{S}_i)-R_k^{BD}\big]  \\
&   +\sum_{j=1}^{K_t-1}\mu_j \sum_{k=1}^{K_r}\mathrm{Tr}\big[(\mathbf{B}_{K_t}-\mathbf{B}_j)\mathbf{S}_k\big]-\frac{1}{P}\sum_{k=1}^{K_r}\mathrm{Tr}(\mathbf{B}_{K_t}\mathbf{S}_k) \Bigg\}
\end{aligned}
\end{equation}
 Then (R-P5-D) is equivalent to
\begin{equation}
\begin{aligned}
\text{(P7):}\quad\underset{\{\lambda_k\},\{\mu_j\}_{j=1}^{K_t-1}}{\text{minimize}}\quad &  \Tilde{g}\left(\{\lambda_k\},\{\mu_j\}_{j=1}^{K_t-1}\right)\\
\text{subject to} \quad & \sum_{j=1}^{K_t-1}\mu_j\leq 1/P\\
& \mu_j\geq 0,\quad j=1,\ldots,K_t-1\\
& \lambda_k\geq 0,\quad k=1,\ldots,K_r
\end{aligned}
\end{equation}
The subgradient of (P7) can be found with the following Lemma.
\begin{lemma}\label{Lemma:subgradients}
 With the primal solution $\{\mathbf{S}_k^{\star}\}$ given by \eqref{E:SkStarSolution} for a given set of dual variables $\{\lambda_k\}$ and $\{\mu_j\}$, the subgradient of
$\Tilde{g}\left(\{\lambda_k\},\{\mu_j\}_{j=1}^{K_t-1}\right)$ is given by
\begin{align}
s_{\lambda_k}&=\mathrm{log}|\mathbf{I}+\mathbf{H}_k\mathbf{S}_k^{\star}\mathbf{H}_k^H|-\mathrm{Tr}(\mathbf{F}_k\sum_{i\neq k}^{K_r}\mathbf{S}_i^{\star})-R_{k}^{BD},\quad k=1,\ldots,K_r \label{E:subgradientLambda} \\
s_{\mu_j}&=\sum_{k=1}^{K_r}\mathrm{Tr}[(\mathbf{B}_{K_t}-\mathbf{B}_j)\mathbf{S}_k^{\star}],\quad j=1,\ldots,K_t-1 \label{E:subgradientMu}
\end{align}
\end{lemma}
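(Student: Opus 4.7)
The plan is to prove Lemma~\ref{Lemma:subgradients} by invoking the standard characterization of the subgradient of a pointwise maximum of affine functions. Specifically, from the definition in \eqref{ReducedLagrangian}, the reduced dual function $\Tilde{g}(\{\lambda_k\},\{\mu_j\}_{j=1}^{K_t-1})$ is obtained as the maximum over $\{\mathbf{S}_k\succeq\mathbf{0}\}$ of a function that, for each fixed feasible $\{\mathbf{S}_k\}$, is \emph{affine} in the dual variables. The pointwise maximum of a family of affine functions is convex, and by a well-known result (see, e.g., Bertsekas, \emph{Nonlinear Programming}, or \cite{229}), the gradient with respect to the dual variables of the affine function attaining the maximum yields a valid subgradient.

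First, I would re-arrange the expression inside the braces of \eqref{ReducedLagrangian} to make the linear dependence on $\{\lambda_k\}$ and $\{\mu_j\}_{j=1}^{K_t-1}$ explicit. Collecting terms, the objective can be written as
\begin{equation}
\phi(\{\mathbf{S}_k\},\{\lambda_k\},\{\mu_j\})=\sum_{k=1}^{K_r}\lambda_k\,a_k(\{\mathbf{S}_i\})+\sum_{j=1}^{K_t-1}\mu_j\,b_j(\{\mathbf{S}_i\})+c(\{\mathbf{S}_i\}),
\end{equation}
where
\begin{equation}
a_k(\{\mathbf{S}_i\})=\mathrm{log}|\mathbf{I}+\mathbf{H}_k\mathbf{S}_k\mathbf{H}_k^H|-\mathrm{Tr}\Big(\mathbf{F}_k\sum_{i\neq k}^{K_r}\mathbf{S}_i\Big)-R_k^{BD},
\end{equation}
\begin{equation}
b_j(\{\mathbf{S}_i\})=\sum_{k=1}^{K_r}\mathrm{Tr}\big[(\mathbf{B}_{K_t}-\mathbf{B}_j)\mathbf{S}_k\big],
\end{equation}
and $c(\{\mathbf{S}_i\})=-\tfrac{1}{P}\sum_{k}\mathrm{Tr}(\mathbf{B}_{K_t}\mathbf{S}_k)$ does not depend on the dual variables.

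Next, I would let $\{\mathbf{S}_k^{\star}\}$ denote a maximizer of $\phi$ at the current dual point $(\{\lambda_k\},\{\mu_j\})$; such a maximizer is provided in closed form by \eqref{E:SkStarSolution} (together with Lemma~\ref{Lemma:rankCk} ensuring boundedness). Then by definition $\Tilde{g}(\{\lambda_k\},\{\mu_j\})=\phi(\{\mathbf{S}_k^{\star}\},\{\lambda_k\},\{\mu_j\})$, and for any other dual point $(\{\lambda_k'\},\{\mu_j'\})$ the inequality $\Tilde{g}(\{\lambda_k'\},\{\mu_j'\})\geq \phi(\{\mathbf{S}_k^{\star}\},\{\lambda_k'\},\{\mu_j'\})$ holds (since $\{\mathbf{S}_k^{\star}\}$ is only one feasible choice in the maximization defining $\Tilde{g}$ at the new point). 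Combining these with the affinity of $\phi$ in the dual variables gives the subgradient inequality
\begin{equation}
\Tilde{g}(\{\lambda_k'\},\{\mu_j'\})\geq \Tilde{g}(\{\lambda_k\},\{\mu_j\})+\sum_{k}a_k(\{\mathbf{S}_i^{\star}\})(\lambda_k'-\lambda_k)+\sum_{j=1}^{K_t-1}b_j(\{\mathbf{S}_i^{\star}\})(\mu_j'-\mu_j),
\end{equation}
which identifies $s_{\lambda_k}=a_k(\{\mathbf{S}_i^{\star}\})$ and $s_{\mu_j}=b_j(\{\mathbf{S}_i^{\star}\})$ as the claimed expressions \eqref{E:subgradientLambda} and \eqref{E:subgradientMu}, respectively.

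There is no real obstacle in this argument, since it is a routine application of Danskin-type subgradient rules; the only care needed is in correctly bookkeeping the rewriting in \eqref{ReducedLagrangian} so that the coefficients of $\lambda_k$ and $\mu_j$ are correctly identified (in particular, handling the elimination of $\mu_{K_t}$ via $\mu_{K_t}=1/P-\sum_{j=1}^{K_t-1}\mu_j$, which is precisely what produces the $(\mathbf{B}_{K_t}-\mathbf{B}_j)$ term in $b_j$). I would conclude by remarking that $\Tilde{g}$ need not be differentiable in general (the maximizer $\{\mathbf{S}_k^{\star}\}$ may not be unique), so the above expressions are only subgradients, which is exactly what the subgradient method requires for the dual update.
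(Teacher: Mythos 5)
Your proposal is correct and follows essentially the same argument as the paper's proof in Appendix~\ref{ProofSubgradientLemma}: both exploit that $\Tilde{g}$ is a pointwise maximum over $\{\mathbf{S}_k\succeq\mathbf{0}\}$ of functions affine in the dual variables, lower-bound $\Tilde{g}$ at a new dual point by evaluating at the maximizer $\{\mathbf{S}_k^{\star}\}$ for the current point, and read off the affine coefficients as the subgradient components. Your explicit separation into $a_k$, $b_j$, and the dual-independent term $c$ is just a cleaner bookkeeping of the same inline computation the paper performs in \eqref{E:equality1}--\eqref{E:equality2}.
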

\begin{proof} See Appendix~\ref{ProofSubgradientLemma}.\end{proof}

With the subgradient obtained, the dual variables can then be updated with subgradient-based method, such as ellipsoid method \cite{228}.

\subsubsection{Primal-dual Method for (P5)}
The algorithm for solving (R-P5), and hence the non-convex problem (P5) is now summarized in Algorithm~\ref{algo1}.
\begin{algorithm}[H]
\caption{Primal-dual Method for (P5)}
\begin{algorithmic}[1]\label{algo1}
\STATE Initialize $\lambda_k\geq 0, \forall k$ and $\mu_j\geq 0$, $j\in \{1,\ldots,K_t-1\}$,\ $\sum_{j=1}^{K_t-1}\mu_j\leq 1/P$.
\REPEAT
\STATE With $\{\lambda_k\}$ and $\{\mu_j\}_{j=1}^{K_t}$, where $\mu_{K_t}=1/P-\sum_{j=1}^{K_t-1}\mu_j$, solve for $\{\mathbf{S}_k^{\star}\}$ using \eqref{E:SkStarSolution}.
\STATE Compute the subgradient of $\Tilde{g}\left(\{\lambda_k\},\{\mu_j\}_{j=1}^{K_t-1}\right)$ using \eqref{E:subgradientLambda} and \eqref{E:subgradientMu}, then update $\{\lambda_k\}$ and $\{\mu_j\}_{j=1}^{K_t-1}$ accordingly based on the ellipsoid method \cite{228}.
\UNTIL{$\{\lambda_k\}_{k=1}^{K_r}$ and $\{\mu_j\}_{j=1}^{K_t-1}$ converge to a prescribed accuracy.}
\STATE Then $\{\mathbf{S}_k^{\star}\}$ approaches to the optimal solution $\{\mathbf{S}_k^{\star\star}\}$. Set $\rho^{\star\star}$ using \eqref{E:rhoStarStar}.
\end{algorithmic}
\end{algorithm}


 \subsection{Improved precoding over BD}
 Based on previous discussions, for the given optimal BD solution (or ZF precoding when $N_r=1$), the following steps can be applied to find an improved linear precoder design.
 \begin{algorithm}[H]
\caption{Improved precoding over BD}
\begin{algorithmic}[1]
\STATE Solve (P5) with Algorithm~\ref{algo1} $\big($or (P4) with \texttt{CVX} when $N_r=1 \big)$. Denote the solution as $\big\{\{\mathbf{S}_k^{\star\star}\},\rho^{\star\star}\big\}$ $\big($or $\big\{\{\mathbf{w}_k^{\star}\}, \rho^{\star}\big\}$ when $N_r=1$$\big)$.
\STATE Set the proposed transmit covariance matrices as $\mathbf{S}_k^{prop}=\mathbf{S}_k^{\star\star}/\rho^{\star\star}, $ $\big($or the proposed precoder when $N_r=1$ as $\mathbf{w}_k^{prop}=\mathbf{w}_k^{\star}/\sqrt{\rho^{\star}}$$\big), \forall k$.
\end{algorithmic}
\end{algorithm}

\section{Numerical results}\label{S:simulation}
 This section presents the numerical results. For the simulations below, the entries of the channel matrices $\mathbf{H}_k$ are independently and identically distributed (i.i.d) circularly symmetric complex Gaussian random variables with zero mean and unit variance. Since the noise power is normalized, the system SNR is defined as $SNR\triangleq P$, where $P$ is the maximum power for each BS. Algorithm~\ref{algo1} is terminated when the volume of the ellipsoid containing the optimal dual variables is sufficiently small, or more specifically, when $\sqrt{\mathbf{s}^{T}\mathbf{E}\mathbf{s}}\leq 10^{-6}$, where $\mathbf{s}$ is the subgradient vector, $\mathbf{E}$ is the positive definite matrix whose eigenvectors define the principal  directions of the ellipsoid.

\subsection{Convergence Behavior of Algorithm~\ref{algo1}}
The convergence behavior of Algorithm $1$ is illustrated with one channel realization at $SNR=0$ dB. A network with $[K_t \ N_t \ K_r \ N_r]=[3 \ 2 \ 3 \ 2]$ is simulated. The initial values of the dual variables are assigned with $\mu_j=1/(PK_t), \forall j$ and $\lambda_k=0.1, \forall k$. Algorithm \ref{algo1} generates a sequence of transmit covariance matrices set $\{\mathbf{S}_k^{\star}\}$. The achievable sum rate with the scaled covariance matrices $\{\mathbf{S}_k^{\star}/\rho^{\star}\}$ is plotted in Fig.~\ref{F:fig3}, where similar to \eqref{E:rhoStarStar}, $\rho^{\star}= \underset{j\in\{1,\cdots,K_t\}}{\max}(1/P)\sum_{k=1}^{K_r}\mathrm{Tr}(\mathbf{B}_j\mathbf{S}_k^{\star})$. Such scaling ensures that the per-BS power constraints are satisfied. The BD solution is also plotted with dotted line for comparison. It is observed that the algorithm eventually converges to a fixed sum rate, which significantly outperforms the optimal BD. Similar to that in \cite{220}, the convergence speed depends on the total number of dual variables, $K_r+K_t-1$. With the ellipsoid method, it is known that the complexity is of the order $\mathcal{O}[(K_r+K_t-1)^2]$ for large system. It is noted that the convergence point does not necessarily give the optimal solution, since higher sum rate has been observed in previous iterations. This is due to the approximations that have been made for solving (P3). However, the algorithm does converge to a  point with a sum rate very close to the highest rate that has appeared so far, as shown in Fig.~\ref{F:fig3}.
\subsection{Sum Rate Comparison}
The sum rate achieved with the proposed scheme is compared with the optimal BD, as well as two MMSE-based precoding schemes \cite{224,225}, denoted as ``MMSE Zhang'' and ``MMSE Shi'' in the figure, respectively. A network with parameters $[K_t \ N_t \ K_r \ N_r]=[3 \ 2 \ 3 \ 2]$ is simulated. The average sum rate over $10000$ channel realizations is plotted in Fig~\ref{F:fig4}. Firstly, it is observed that the two MMSE-based schemes, although provide some rate gain over BD at low SNR, perform worse than BD in the high SNR regime. Furthermore, the performance degradation increases with SNR. On the other hand, the proposed scheme outperforms the optimal BD across all SNR ranges and the gain is more pronounced in the low to medium SNR regime. The average value of $1/\rho^{\star\star}$ in dB, with $\rho^{\star\star}$ the optimal power factor for (P5), is also plotted in Fig.~\ref{F:fig5}. Since (P5) is an approximated problem formulation of (P3), $1/\rho^{\star\star}$ in dB can be viewed as the approximated SNR enhancement, as discussed in Section~\ref{S:proposed}.  Fig.~\ref{F:fig5} verifies the sum rate gain in Fig.~\ref{F:fig4} and it also shows that solving the non-convex problem (P3) by solving (P5) is a reasonable approximation.

\section{Conclusions}\label{S:conclusions}
 This paper proposes an improved linear precoding scheme over over BD in multi-cell cooperative downlink networks under per-BS power constraints. The performance gain is achieved by applying an effective-SNR-enhancement technique. It is shown that by solving a power minimization problem subject to a minimum rate constraint achieved by BD, and using the properly scaled transmit covariance matrices at each transmitter, the system noise can be effectively suppressed and the SNR can be enhanced. Such a technique provides a method to compensate the transmit power boost problem associated with BD. The power minimization problem is in general non-convex, due to the non-convex rate and rank constraints. In order to find an efficient solution, the rate constraint is convexified by using Taylor approximation. Then the rank-relaxed convexified problem is solved with the dual method. The closed form solution shows that there is always an optimal solution for the rank-relaxed problem such that the rank constraint is guaranteed to be satisfied. Therefore, the solution is also optimal to the rank-constrained non-convex problem. The proposed scheme is efficient since only convex optimization problem is required to be solved. Simulation results show a significant sum rate gain over the optimal BD and existing MMSE-based schemes.

\appendices
\section{Proof of Lemma~\ref{Lemma:rankCk}}\label{ProofRankCk}
It can be verified that at the optimal solution to (R-P5), the inequality constraints \eqref{C:P5RateConstraint} will be active. Then based on the complementary slackness condition \cite{202}, we can assume that the optimal dual variables  $\{\lambda_k^{\star}\}$ are positive. Therefore, we can assume that $\lambda_k>0$ in (P6). We then prove Lemma~\ref{Lemma:rankCk} by contradiction. Since $\mathbf{C}_k$ is Hermitian, all the eigenvalues are real. Suppose that $\mathbf{C}_k$ has a non-positive eigenvalue, i.e., $\exists \alpha\leq 0 $ and a normalized vector $\mathbf{q}$, with $\mathbf{q}^H\mathbf{q}=1$ such that $\mathbf{C}_k\mathbf{q}=\alpha \mathbf{q}$. Then let $\mathbf{S}_k=t\mathbf{q}\mathbf{q}^{H}$ with $t\geq 0$. Substituting into the objective function of (P6) yields
\begin{align}
\quad & \lambda_k\mathrm{log}\left | \mathbf{I}+t\mathbf{H}_k\mathbf{q}\mathbf{q}^H\mathbf{H}_k^H \right |-\mathrm{Tr}(t\mathbf{C}_k\mathbf{q}\mathbf{q}^H) \notag \\
=&\lambda_k\mathrm{log}(1+t \|\mathbf{H}_k\mathbf{q}\|^2)-\alpha t. \label{P6Value}
\end{align}
 Since $\alpha \leq 0$, as $t\rightarrow \infty$, the value of \eqref{P6Value} becomes unbounded provided that $\mathbf{H}_k\mathbf{q}\neq \mathbf{0}$ (which is true with probability one with independent channel realizations and the fact that $\mathbf{C}_k$ does not depend on $\mathbf{H}_k$). Therefore, we conclude that in order to have a bounded objective value for (P6), all eigenvalues of $\mathbf{C}_k$ should be positive. As a result, Lemma~\ref{Lemma:rankCk} follows.

\section{Proof of Lemma~\ref{Lemma:subgradients}}\label{ProofSubgradientLemma}
A vector $\mathbf{s}$ is a subgradient of function $\Tilde{g}(\mathbf{x})$ at point $\mathbf{x}$ if
\begin{align}
\Tilde{g}\left(\bar{\mathbf{x}}\right)\geq \Tilde{g}\left(\mathbf{x}\right)+\mathbf{s}^{T}(\bar{\mathbf{x}}-\mathbf{x}), \forall \bar{\mathbf{x}},
\end{align}
Or equivalently, the vector formed by $\{s_{\lambda_k}\}$ and $\{s_{\mu_j}\}$ is a subgradient of $\Tilde{g}\left(\{\lambda_k\},\{\mu_j\}_{j=1}^{K_t-1}\right)$ if
\begin{align}\label{E:subgDef}
\Tilde{g}\left(\{\bar{\lambda}_k\},\{\bar{\mu}_j\}_{j=1}^{K_t-1}\right)
\geq \Tilde{g}\left(\{\lambda_k\},\{\mu_j\}_{j=1}^{K_t-1}\right) + \sum_{k=1}^{K_r}s_{\lambda_k}(\bar{\lambda}_k-\lambda_k)
+\sum_{j=1}^{K_t-1}s_{\mu_j}(\bar{\mu}_j-\mu_j), \forall \bar{\lambda}_k,\bar{\mu}_j
\end{align}
For the given $\{\lambda_k\}$ and $\{\mu_j\}_{j=1}^{K_t-1}$, denote by $\{\mathbf{S}_k^{\star}\}$ be the transmit covariance matrices that achieves the maximum dual function value $\Tilde{g}\left(\{\lambda_k\},\{\mu_j\}_{j=1}^{K_t-1}\right)$. Then $\forall \bar{\lambda}_k,\bar{\mu}_j$,
\begin{align}
\Tilde{g}\left(\{\bar{\lambda}_k\},\{\bar{\mu}_j\}_{j=1}^{K_t-1}\right)
=\underset{\mathbf{S}_k\succeq \mathbf{0},\forall k}{\max}
& \Bigg\{  \sum\limits_{k=1}^{K_r}\bar{\lambda}_k \big[\mathrm{log}|\mathbf{I}+\mathbf{H}_k\mathbf{S}_k\mathbf{H}_k^H|-\mathrm{Tr}(\mathbf{F}_k\sum_{i\neq k}^{K_r}\mathbf{S}_i)-R_k^{BD}\big]  \notag \\
&   +\sum_{j=1}^{K_t-1}\bar{\mu}_j \sum_{k=1}^{K_r}\mathrm{Tr}\big[(\mathbf{B}_{K_t}-\mathbf{B}_j)\mathbf{S}_k\big]-\frac{1}{P}\sum_{k=1}^{K_r}\mathrm{Tr}(\mathbf{B}_{K_t}\mathbf{S}_k) \Bigg\} \label{E:equality1} \\
\geq   \sum\limits_{k=1}^{K_r} & \bar{\lambda}_k \big[\mathrm{log}|\mathbf{I}+  \mathbf{H}_k\mathbf{S}_k^{\star}\mathbf{H}_k^H|-\mathrm{Tr}(\mathbf{F}_k\sum_{i\neq k}^{K_r}\mathbf{S}_i^{\star})-R_k^{BD}\big] \notag \\
&   +\sum_{j=1}^{K_t-1}\bar{\mu}_j \sum_{k=1}^{K_r}\mathrm{Tr}\big[(\mathbf{B}_{K_t}-\mathbf{B}_j)\mathbf{S}_k^{\star}\big]-\frac{1}{P}\sum_{k=1}^{K_r}\mathrm{Tr}(\mathbf{B}_{K_t}\mathbf{S}_k^{\star}) \label{E:inequality1}\\
=\quad \Tilde{g}& \left( \{\lambda_k\}, \{\mu_j\}_{j=1}^{K_t-1}\right)  + \sum_{k=1}^{K_r}s_{\lambda_k}(\bar{\lambda}_k-\lambda_k)
+\sum_{j=1}^{K_t-1}s_{\mu_j}(\bar{\mu}_j-\mu_j), \label{E:equality2}
\end{align}
where equality \eqref{E:equality1} follows from \eqref{ReducedLagrangian}, inequality \eqref{E:inequality1} follows since $\Tilde{g}\left(\{\bar{\lambda}_k\},\{\bar{\mu}_j\}_{j=1}^{K_t-1}\right)$ is the maximum value over all $\mathbf{S}_k\succeq\mathbf{0}$ for the given dual variables $\{\bar{\lambda}_k\}$ and $\{\bar{\mu}_j\}_{j=1}^{K_t-1}$.  $s_{\lambda_k}$ and $s_{\mu_j}$ are given by \eqref{E:subgradientLambda} and \eqref{E:subgradientMu}, respectively. Equality \eqref{E:equality2} is obtained by using $\bar{\lambda}_k=(\bar{\lambda}_k-\lambda_k)+\lambda_k, \bar{\mu}_j=(\bar{\mu}_j-\mu_j)+\mu_j$ and the fact that $\{\mathbf{S}_k^{\star}\}$ achieves the maximum value $\Tilde{g}\left( \{\lambda_k\}, \{\mu_j\}_{j=1}^{K_t-1}\right)$.  Then together with \eqref{E:subgDef}, Lemma~\ref{Lemma:subgradients} follows.

\bibliographystyle{IEEEtran}
\bibliography{IEEEabrv,IEEEfull}

\begin{figure}[htb]
\centering
\includegraphics[width=4in, height=3in]{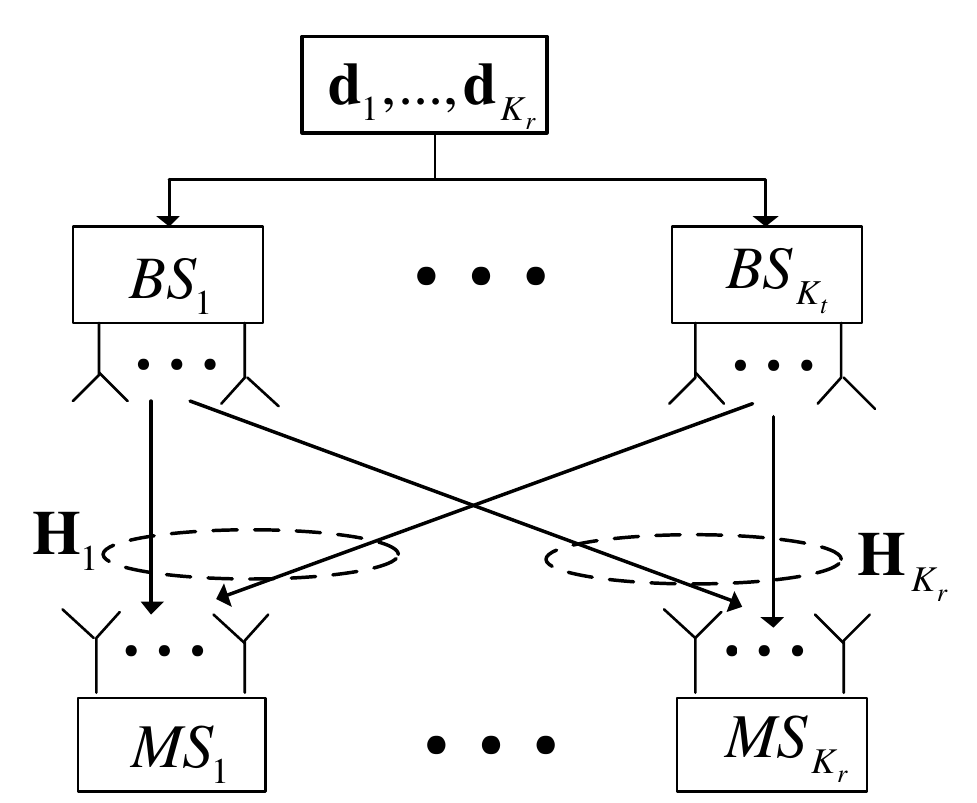}
\caption{System Model}
\label{F:fig1}
\end{figure}
%
%
%
\begin{figure}
\centering
\includegraphics[width=13cm, height=9.5cm]{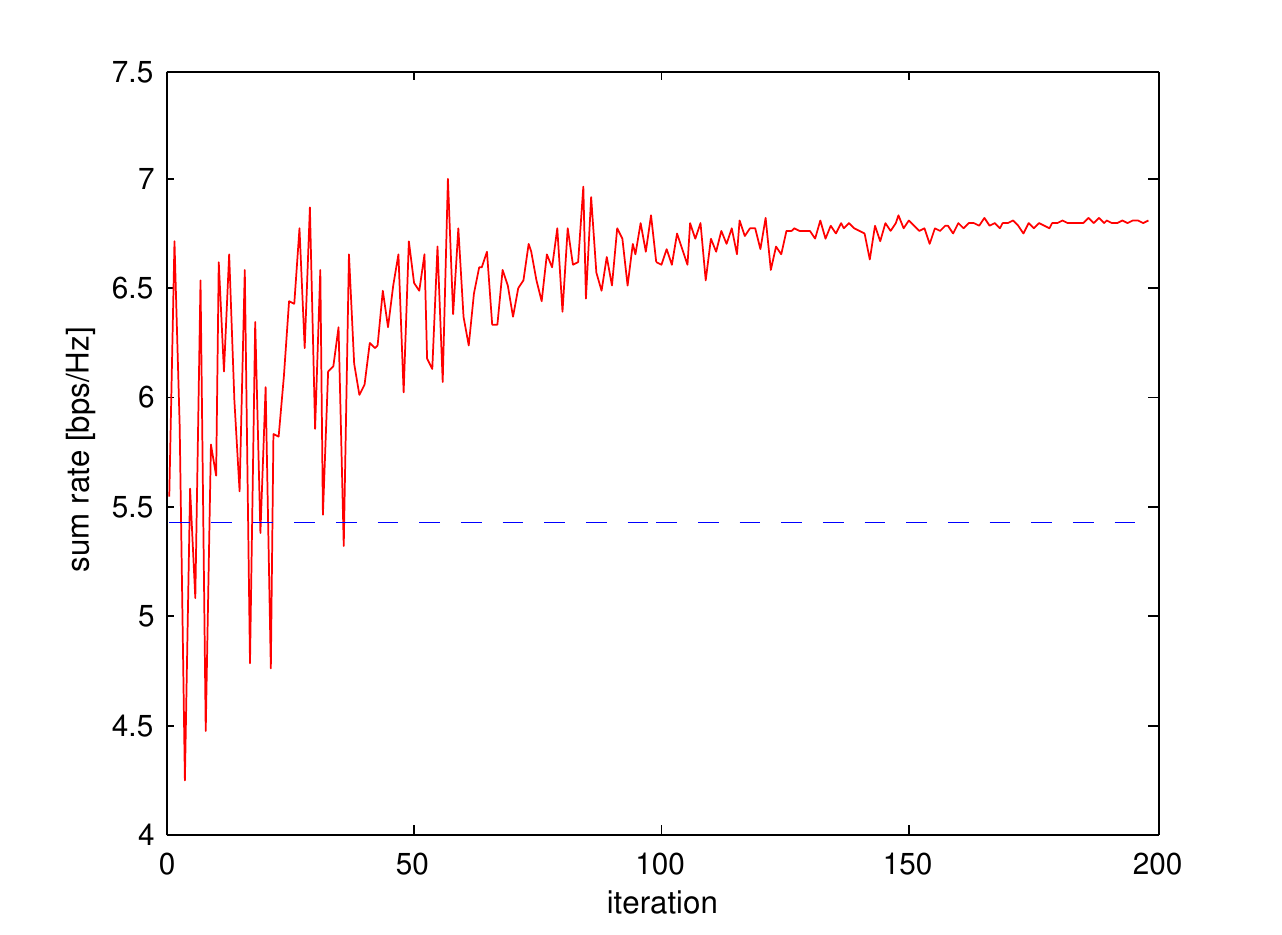}
\caption{Convergence behavior of Algorithm $1$. $[K_t \ N_t \ K_r \ N_r]=[3 \ 2 \ 3 \ 2], SNR = 10 $ dB. Dashed line shows the achieved sum rate with BD.}
\label{F:fig3}
\end{figure}
\begin{figure}
\centering
\includegraphics[width=13cm, height=9.5cm]{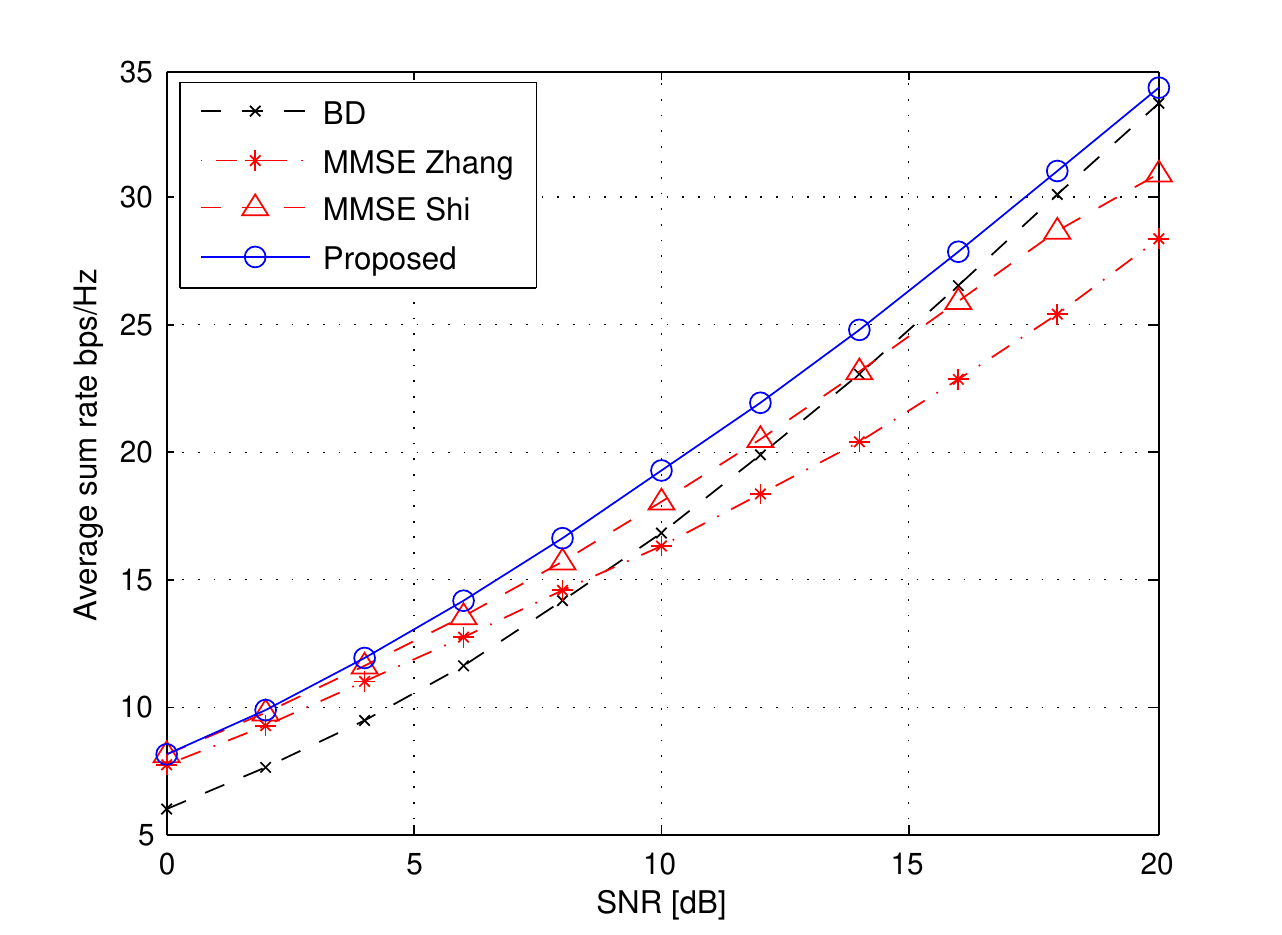}
\caption{Average sum rate for various schemes, $[K_t \ N_t \ K_r \ N_r]=[3 \ 2 \ 3 \ 2]$.}
\label{F:fig4}
\end{figure}
\begin{figure}
\centering
\includegraphics [width=13cm,height=9.5cm]{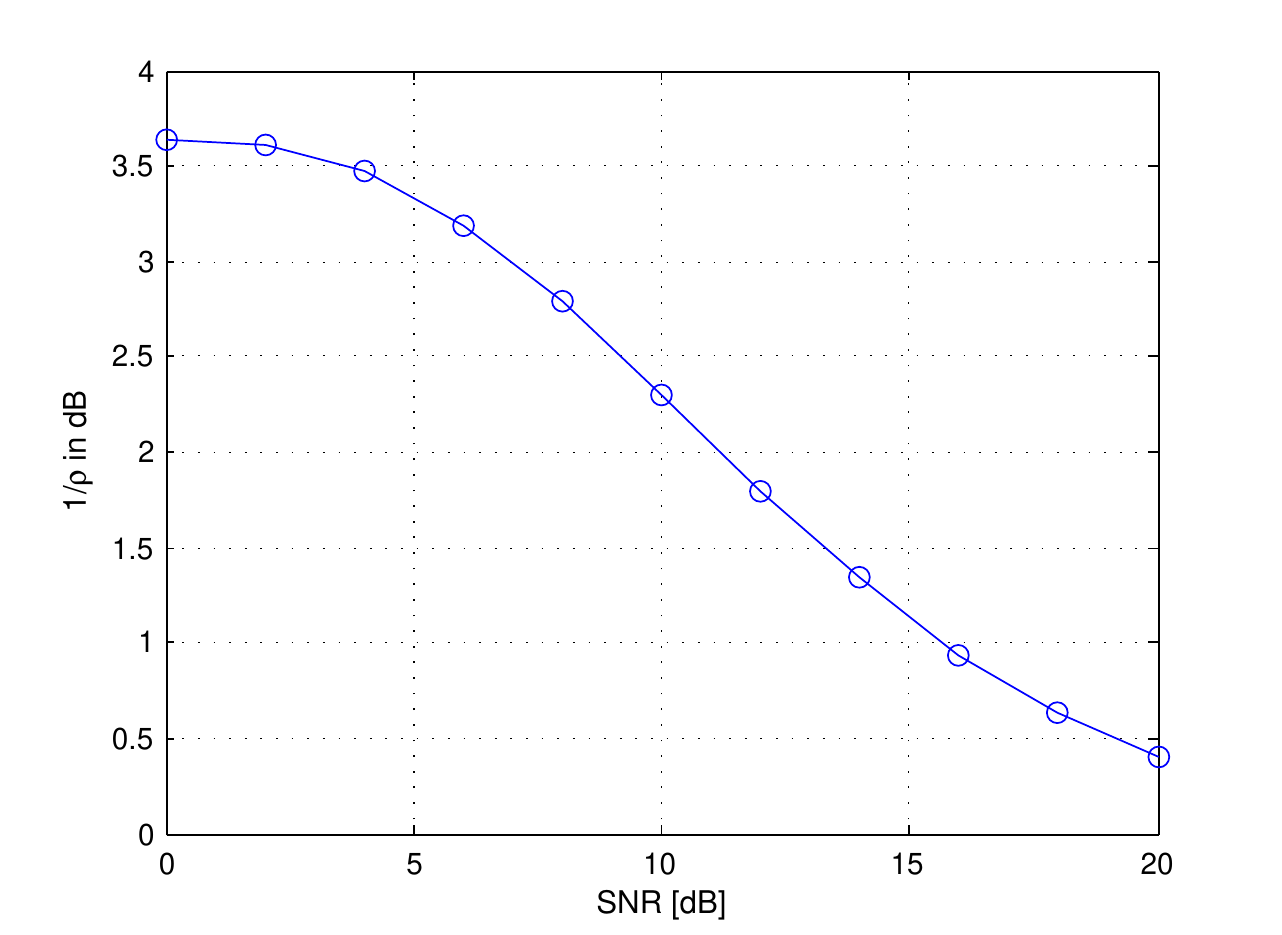}
\caption{Approximated effective SNR enhancement $1/\rho^{\star\star}$ in dB, $[K_t \ N_t \ K_r \ N_r]=[3 \ 2 \ 3 \ 2]$.}
\label{F:fig5}
\end{figure}
\end{document}